\definecolor{darkred}{rgb}{0.4,0.1,0.1}
\definecolor{darkblue}{rgb}{0.1,0.1,0.4}
\definecolor{darkgrey}{rgb}{0.5,0.5,0.5}
\numberwithin{equation}{section}
\theoremstyle{plain}% default
\newtheorem{theorem}{Theorem}[section]
\newtheorem{lemma}[theorem]{Lemma}
\newtheorem{proposition}[theorem]{Proposition}
\newtheorem{dfn}[theorem]{Definition}
\theoremstyle{remark}
\newtheorem{remark}[theorem]{Remark}
\theoremstyle{plain}
\newcommand{\be}{\begin{equation}}
\newcommand{\ee}{\end{equation}}
\newcommand{\beu}{\begin{equation*}}
\newcommand{\eeu}{\end{equation*}}
\newcommand{\besu}{\begin{equation*}
\begin{aligned}}
\newcommand{\eesu}{\end{aligned}
\end{equation*}}
\newcommand{\bes}{\begin{equation}
\begin{aligned}}
\newcommand{\ees}{\end{aligned}
\end{equation}}
\newcommand\void[1]{}
\title[{ Point contacts and boundary triples}]{{\small Point contacts and boundary triples}}
\begin{document}

\author{Vladimir Lotoreichik}
\address{Institute of Computational Mathematics, 
Graz University of Technology, Steyrergasse 30,
Graz, 8010, Austria,
E-mail: lotoreichik@math.tugraz.at}

\author{Hagen Neidhardt}
\address{Weierstrass Institute
for Applied Analysis and Stochastics, Mohrenstrasse 39,
Berlin, 10117, Germany,
E-mail: Hagen.Neidhardt@wias-berlin.de}

\author{Igor Yu. Popov}

\address{Department of Higher Mathematics, St. Petersbsurg National Research University of IT, Mechanics and Optics, Kronverkskiy pr. 49, St.~Petersburg, 197101, Russian Federation,
E-mail: popov1955@gmail.com}

\begin{abstract}
We suggest an abstract approach for point contact 
problems in the framework of boundary triples. 
Using this approach  we obtain the perturbation series for a simple eigenvalue in the discrete spectrum of the model self-adjoint extension with weak point coupling.  An example
of a two-level quantum model is provided.
\end{abstract}

\keywords{Boundary triples; Weyl function; Point contacts;  Weak coupling; Perturbation series.}
\subjclass[2010]{47A10, 47A55, 47B25, 47F05}
\maketitle
\vspace{-2mm}
\textit{\small For the proceedings of the conference QMath12, Berlin, 2013.}
%****************************************************************
\section{Introduction}
%****************************************************************

Let $H_0$ be a self-adjoint operator with an isolated simple eigenvalue $\lambda_0$. Further let $V$ be a bounded or unbounded self-adjoint operator 
such that the family of operators $H(\varkappa) := H_0 + \varkappa V$
is well-defined and self-adjoint for sufficiently small coupling constants $\varkappa \in \mathbb{R}$. If $V$ is relatively compact with respect to $H_0$, 
then there is a smooth function $\lambda(\varkappa)$ such that $\lambda(\varkappa)$ is a simple eigenvalue of $H(\varkappa)$ for each $\varkappa$ and
$\lim_{\varkappa\to 0}\lambda(\varkappa) = \lambda_0$ holds,  
cf. \cite{Kato1995}. Since the function $\lambda(\varkappa)$ 
is smooth it admits a Taylor-type expansion of the form 
\begin{equation}\label{1.0}
\lambda(\varkappa) = \lambda_0 + a\varkappa  + b\varkappa^2 + O(\varkappa^3).
\end{equation}
The problem is to compute the coefficients $a$ and $b$ of this perturbation series
in terms of the operators $H_0$ and $V$.

In a slightly modified form, similar problem 
appears for point contacts in quantum mechanics. Typically one considers 
two quantum systems which do not interact, where one of them has a simple isolated eigenvalue $\lambda_0$. If both systems are coupled by a point contact,
then the eigenvalue $\lambda_0$ can move either along the real axis or become a pole of the analytic continuation of the resolvent of the coupled system in the lower complex half-plane. In the last case one speaks of resonances. 
The eigenvalue case realizes if the second system has no spectrum around $\lambda_0$ while the resonance case appears 
if the second system has continuous spectrum around $\lambda_0$, that is, if 
$\lambda_0$ is an embedded eigenvalue for the decoupled systems.  
If the point interaction depends on a parameter $\varkappa$ such that for $\varkappa \to 0$ the coupled system converges to the decoupled one, then again a perturbation series is expected either for the eigenvalues or for the resonances.
In the following we focus on the eigenvalue case.

Perturbation series for point interactions were perhaps first studied by 
B.\,S.~Pavlov in \cite{P84,P87} for 
a model of point interactions with an inner structure, where 
the first order coefficient $a$ was computed.
A direct sum of two three-dimensional Schr\"odinger operators
coupled by a point contact was considered by P.~Exner in  \cite{E91}. In this paper he was able to compute the first and second order coefficients $a$ and $b$.
See also related work \cite{CCF09} on spin-dependent
point interactions and \cite{CCF10} 
for perturbation of eigenvalues at threshold in point contact models.
A survey on the resonance case can be found in \cite{E13}, 
see also references therein.   
Point contact models are often used in other areas of mathematical physics. 
In \cite{P93} a model of a small window in the screen is studied. 
In \cite{P12} Maxwell and Schr\"odinger operators are coupled via a point contact and  in \cite{P13} a model of a three-dimensional Helmholtz resonator is constructed via point coupling.

In the following we consider an abstract point contact model and are interested in the perturbation series for its eigenvalues. 
In particular, let $\widetilde{A}$ and $\widehat{A}$
be two densely defined closed symmetric operators 
in the Hilbert spaces $\widetilde{\mathcal H}$ and $\widehat{\mathcal H}$,
respectively, both having equal finite deficiency indices $(d,d)$.
Let us consider the direct sum $A := \widetilde{A} \oplus \widehat{A}$ 
which is also a densely defined closed symmetric operator in
$\widetilde{\mathcal H}\oplus\widehat{\mathcal H}$
with deficiency indices $(2d,2d)$. 
Further, let $\widetilde A_{[\alpha]}$ and $\widehat A_{[\beta]}$ be self-adjoint extensions of $\widetilde{A}$ and $\widehat{A}$, respectively. 
The Hamiltonian of the decoupled system is given by 
$H_0 := \widetilde A_{[\alpha]} \oplus \widehat A_{[\beta]}$,
which is a self-adjoint extension of $A$. 
As usual the Hamiltonian of the point coupled system
is given by another self-adjoint extension $H$ of $A$,
which can not be decomposed into the orthogonal sum with respect 
to the decomposition $\widetilde {\mathcal H}\oplus \widehat {\mathcal H}$. The family  $H(\varkappa)$ from above is now replaced 
by a one-parametric family of point contacts, that means,  
by a family of self-adjoint extensions $H(\varkappa)$ of $A$. 

To make the problem precise we use the framework of boundary triples. 
In this framework a subfamily $A_\Lambda$ of self-adjoint extensions  of $A$
are labeled by Hermitian matrices $\Lambda$ in ${\mathbb C}^{2d}$ via 
an abstract boundary condition involving $\Lambda$.
In particular, there is a Hermitian matrix $\Lambda_0$ 
such that $H_0 = A_{\Lambda_0}$. Let us 
assume that $\lambda_0$ is an isolated eigenvalue of 
$\widehat A_{[\beta]}$,
but a resolvent point of  $\widetilde A_{[\alpha]}$. 
Moreover, let $\Lambda(\varkappa)$ 
be a one-parametric sufficiently regular
family of Hermitian matrices in ${\mathbb C}^{2d}$,
which converges to $\Lambda_0$ as $\varkappa \to 0$. 
Setting $H(\varkappa) := A_{\Lambda(\varkappa)}$ 
one gets a family of self-adjoint extensions $H(\varkappa)$ of $A$ 
which converges in an appropriate sense to $H_0$ as $\varkappa \to 0$
and in the discrete spectra of the operators $H(\varkappa)$ 
exists a branch of the type \eqref{1.0}.
The goal is to compute the coefficients $a$ and $b$ for this branch
in terms of the Taylor coefficients of $\Lambda(\varkappa)$
and the abstract Weyl function $M(\cdot)$
which is an important ingredient of the boundary triple approach. 
In general this problem can not be reduced
to the investigation of holomorphic operator families
of the types (A) or (B) in the sense of Kato
which are thoroughly discussed  in \cite{Kato1995}. 
Only in some special cases such a reduction can be done.

We solve this problem for arbitrary $d\in{\mathbb N}$ 
for the first order coefficient $a$.
In the special case of $d = 1$
we obtain first and second order coefficients $a$ and $b$ which is beyond Pavlov~\cite{P84,P87} and covers~\cite{E91}. 
In general, it would be also possible to compute $b$ for
arbitrary finite deficiency indices, however, we have not included that
for the purpose to avoid tedious computations.

Our abstract results are illustrated with a vector two-level quantum model, which is a generalization of the analogous
scalar model considered in~\cite{E91,E13}.

%****************************************************************
\section{Boundary triples and Weyl functions}
%****************************************************************
The reader may consult with~\cite{BMN02, BGP08, DM91, DM95, MN13, S} for the theory of boundary triples and its applications. 
In this note we use this concept only in the case of finite deficiency indices.
Throughout this section the following hypothesis is employed.\\[1.0mm]
{\bf Hypothesis I.}\,{\em Let $A$ be a closed, symmetric, densely defined operator in a Hilbert space
${\mathcal H}$ with equal finite deficiency indices $(d,d)$.
}
%------------------------------------------------------------------
%------------------------------------------------------------------
\begin{dfn}
Assume that Hypothesis I holds.
The triple $\{{\mathbb C}^d,\Gamma_0,\Gamma_1\}$ with $
\Gamma_0, \Gamma_1\colon{\rm dom}\, A^*\rightarrow{\mathbb C}^d$
is a boundary triple for $A^*$ if the following conditions hold:
the mapping $\Gamma := (\Gamma_0,\Gamma_1)^\top$ is surjective onto  ${\mathbb C}^{2d}$ and the abstract Green's identity
$(A^*f,g)_{{\mathcal H}}- (f,A^*g)_{{\mathcal H}}  = (\Gamma_1f, \Gamma_0g)_{{\mathbb C}^d}- (\Gamma_0f, \Gamma_1g)_{{\mathbb C}^d}$
holds for all $f,g\in{\rm dom}\, A^*$.
\end{dfn}
%------------------------------------------------------------------
%------------------------------------------------------------------
\noindent Boundary triples is an efficient tool to parametrize self-adjoint extensions of a symmetric operator.
%------------------------------------------------------------------
%------------------------------------------------------------------
\begin{proposition}[{\cite[Proposition 1.4]{DM95},\cite[Proposition 14.7]{S}}]\label{prop:bt}
Assume that Hypothesis I holds. Let $\{{\mathbb C}^d,\Gamma_0,\Gamma_1\}$ be a boundary triple for $A^*$. Then 
for each self-adjoint extension ${\widetilde A}$ of $A$ there is a unique self-adjoint relation $\Theta$
in $\mathbb{C}^d$ such that
${\widetilde A} = A_\Theta :=
A^*\upharpoonright\{f\in{\rm dom}\, A^*
\colon \Gamma f\in\Theta\}$.
\end{proposition}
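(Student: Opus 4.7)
The plan is to exhibit the bijection explicitly via $\widetilde A\mapsto \Theta_{\widetilde A}:=\Gamma(\dom\widetilde A)\subseteq\mathbb C^d\oplus\mathbb C^d$ with inverse $\Theta\mapsto A_\Theta$, and to verify that self-adjointness is preserved in both directions. Everything hinges on two preliminary facts derived from Green's identity together with the surjectivity of $\Gamma=(\Gamma_0,\Gamma_1)^\top$: first, that $\ker\Gamma=\dom A$, and second, that $(A_\Theta)^*=A_{\Theta^*}$ for every linear relation $\Theta$ in $\mathbb C^d$, where $\Theta^*$ denotes the adjoint linear relation.

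To establish $\ker\Gamma=\dom A$, I would argue as follows. For $f\in\dom A$ and any $g\in\dom A^*$, the left-hand side of Green's identity vanishes because $(Af,g)_\cH=(f,A^*g)_\cH$, so $(\Gamma_1 f,\Gamma_0 g)_{\mathbb C^d}-(\Gamma_0 f,\Gamma_1 g)_{\mathbb C^d}=0$ for every such $g$, and the surjectivity of $\Gamma$ onto $\mathbb C^{2d}$ then forces $\Gamma f=0$. Conversely, if $f\in\dom A^*$ satisfies $\Gamma f=0$, then Green's identity yields $(A^*f,g)_\cH=(f,A^*g)_\cH$ for all $g\in\dom A^*$, in particular for $g\in\dom A\subseteq\dom A^*$, hence $f\in\dom A^{**}=\dom A$ since $A$ is closed. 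Combined with surjectivity, this upgrades $\Gamma$ to a bijection between subspaces of $\dom A^*$ containing $\dom A$ and subspaces of $\mathbb C^{2d}$, i.e.\ linear relations in $\mathbb C^d$.

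For the second preliminary, I would check that $g\in\dom(A_\Theta)^*$ iff $(\Gamma_1 f,\Gamma_0 g)_{\mathbb C^d}=(\Gamma_0 f,\Gamma_1 g)_{\mathbb C^d}$ for every $f$ with $\Gamma f\in\Theta$, which is precisely the defining condition $\Gamma g\in\Theta^*$ for the relation adjoint. Together with the bijectivity from the first step, this gives $(A_\Theta)^*=A_{\Theta^*}$, so $A_\Theta$ is self-adjoint exactly when $\Theta=\Theta^*$. Finally, given any self-adjoint extension $\widetilde A$ of $A$, one has $\widetilde A\subseteq A^*$ automatically (since $\widetilde A=\widetilde A^*\subseteq A^*$), so $\Theta:=\Gamma(\dom\widetilde A)$ is a well-defined relation, $\widetilde A=A_\Theta$ by the bijection, and $\widetilde A=\widetilde A^*$ translates to $\Theta=\Theta^*$ via the adjoint formula just derived. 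Uniqueness of $\Theta$ is the injectivity of $\Gamma$ on intermediate extensions.

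The main technical obstacle is the careful interplay between the analytic content (Green's identity and the closedness of $A$) and the algebraic structure of linear relations on $\mathbb C^d$: one must verify that the condition ``$(A^*f,g)_\cH=(f,A^*g)_\cH$ for all admissible $f$'' transfers cleanly to the statement that $\Gamma g$ lies in the relation adjoint $\Theta^*$, including its possible multi-valued part. Once this translation is set up, the remainder is essentially formal, and the whole result reduces to a bookkeeping exercise in finite-dimensional linear algebra via $\Gamma$.
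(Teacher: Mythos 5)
The paper offers no proof of this proposition --- it is quoted verbatim from \cite{DM95} and \cite{S} --- so the only meaningful comparison is with the standard argument in those references, and your proposal reproduces it faithfully: $\ker\Gamma=\dom A$ via Green's identity plus surjectivity, the adjoint formula $(A_\Theta)^*=A_{\Theta^*}$ for the relation adjoint, and the resulting order-preserving bijection between intermediate extensions and linear relations in $\mathbb{C}^d$. The argument is essentially correct. One small wording slip: to conclude $f\in\dom A^{**}=\dom A$ from $\Gamma f=0$ you must use the identity $(A^*f,g)_{\cH}=(f,A^*g)_{\cH}$ against \emph{all} $g\in\dom A^*$ (which Green's identity does supply), not merely against $g\in\dom A$ as your ``in particular'' suggests --- testing only against $\dom A$ returns the tautology that $f\in\dom A^*$ and proves nothing new.
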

%------------------------------------------------------------------
%------------------------------------------------------------------
\begin{remark}
\label{rem:extensions}
The self-adjoint extension $A_0 := A^*\upharpoonright \ker\Gamma_0$ is distinguished. It corresponds to the self-adjoint relation
$\Theta_\infty := \left\{\begin{pmatrix}0\\h\end{pmatrix}: h \in \mathbb{C}^d\right\}$.
If $\Theta$ is the graph of a Hermitian matrix $\Lambda$ in $\mathbb{C}^d$, i.e $\Theta = \mathrm{graph}(\Lambda)$, then one easily checks that
\begin{equation}
\label{AB}
A_{\mathrm{graph}(\Lambda)} = A_\Lambda := 
A^*\upharpoonright\{f\in{\rm dom}\, A^*
\colon 
\Gamma_1f = \Lambda\Gamma_0f\}.
\end{equation}
The operator $\Lambda$ is called the boundary operator with respect to the boundary triple $\{{\mathbb C}^d,\Gamma_0,\Gamma_1\}$.
\end{remark}
%------------------------------------------------------------------
%------------------------------------------------------------------
\noindent One can associate $\gamma$-fields and Weyl functions with boundary triples.
%------------------------------------------------------------------
%------------------------------------------------------------------
\begin{dfn}
Assume that Hypothesis I holds. Let $\{{\mathbb C}^d,\Gamma_0,\Gamma_1\}$ be a boundary triple for $A^*$. The  function $\gamma \colon\rho(A_0)\rightarrow{\mathcal B}({\mathbb C}^d, {\mathcal H})$ defined as 
\[
\gamma(\lambda) := \big(\Gamma_0\upharpoonright\ker(A^*-\lambda)\big)^{-1},
\qquad \lambda\in\rho(A_0),
\]
is called the \emph{$\gamma$-field}. 
\noindent The function $M\colon\rho(A_0)\rightarrow {\mathbb C}^{d\times d}$ defined as $M(\lambda) := \Gamma_1\gamma(\lambda)$ is called the \emph{Weyl function}.
\end{dfn}
%------------------------------------------------------------------
%------------------------------------------------------------------
\begin{proposition}
\label{prop:gammaM}
Assume that Hypothesis~I holds. Let $M$ be the Weyl function associated with a boundary triple $\{{\mathbb C}^d,\Gamma_0,\Gamma_1\}$ for $A^*$. Let the self-adjoint operator $A_\Lambda$ in $\mathcal H$ 
be as in~\eqref{AB}. Then the following statements hold.
\begin{itemize}\setlength{\parskip}{1.0mm}
\item [\em (i)] The function $M(\cdot)$ is holomorphic on $\rho(A_0)$. 
\item [\em (ii)] For $\lambda\in\rho(A_0)$ the relation 
${\rm dim}\ker(A_\Lambda-\lambda) = 
{\rm \dim}\ker(\Lambda - M(\lambda))$ holds.
\item [\em (iii)] If $\lambda_0\in\rho(A_0)$ 
is a simple eigenvalue of $A_{\Lambda}$, then the function 
\[
D_{\Lambda}(\lambda) := {\rm det}(\Lambda  - M(\lambda)),
\qquad \lambda\in\rho(A_0),
\]
has a simple zero at $\lambda = \lambda_0$. 
In particular, $D'_{\Lambda}(\lambda_0) \ne 0$ holds.
\end{itemize}
\end{proposition}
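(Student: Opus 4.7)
For part~(i), the plan is to use the standard first-resolvent-type identity for the $\gamma$-field,
\[
\gamma(\lambda) = \bigl(I + (\lambda - \mu)(A_0 - \lambda)^{-1}\bigr)\gamma(\mu), \qquad \lambda, \mu \in \rho(A_0),
\]
which I would verify by applying $\Gamma_0$ to both sides and invoking $\gamma(\lambda) = (\Gamma_0\upharpoonright\ker(A^* - \lambda))^{-1}$. Holomorphy of $\lambda \mapsto (A_0 - \lambda)^{-1}$ on $\rho(A_0)$ is then transferred to $\gamma(\cdot)$, and thereby to $M(\cdot) = \Gamma_1 \gamma(\cdot)$.

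For part~(ii), I would set up an explicit bijection between the two kernels. Given $f \in \ker(A_\Lambda - \lambda)$, the inclusion $f \in \ker(A^* - \lambda)$ gives $f = \gamma(\lambda)\,\Gamma_0 f$, and the boundary condition $\Gamma_1 f = \Lambda \Gamma_0 f$ from~\eqref{AB}, combined with $\Gamma_1 f = M(\lambda)\,\Gamma_0 f$, yields $\Gamma_0 f \in \ker(\Lambda - M(\lambda))$. Conversely, for any $u \in \ker(\Lambda - M(\lambda))$, the vector $\gamma(\lambda) u$ lies in $\ker(A^* - \lambda)$ and satisfies $\Gamma_1(\gamma(\lambda) u) = M(\lambda) u = \Lambda u = \Lambda\,\Gamma_0(\gamma(\lambda) u)$, hence belongs to $\ker(A_\Lambda - \lambda)$. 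These are mutually inverse linear maps, giving the dimension identity.

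For part~(iii), by~(ii) applied at $\lambda_0$ we have $\dim \ker(\Lambda - M(\lambda_0)) = 1$, so $D_\Lambda(\lambda_0) = 0$. The substantive task is to show the zero is simple. My plan is to invoke Jacobi's formula,
\[
D_\Lambda'(\lambda_0) = -\tr\bigl(\mathrm{adj}(\Lambda - M(\lambda_0))\, M'(\lambda_0)\bigr),
\]
and combine two ingredients. First, since $\lambda_0 \in \mathbb{R}$ and both $\Lambda$ and $M(\lambda_0)$ are Hermitian (the latter via the Nevanlinna-type property $M(\bar{\lambda}) = M(\lambda)^*$), the matrix $\Lambda - M(\lambda_0)$ is Hermitian of rank $d - 1$; its adjugate is therefore a nonzero rank-one matrix of the form $\alpha\, u_0 u_0^*$ with $\alpha \ne 0$, where $u_0$ is a unit vector spanning the kernel. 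Second, the classical identity $M'(\lambda) = \gamma(\bar{\lambda})^*\gamma(\lambda)$ gives $\langle M'(\lambda_0) u_0, u_0\rangle = \|\gamma(\lambda_0) u_0\|^2 > 0$, with strict positivity following from injectivity of $\gamma(\lambda_0)$. Consequently $D_\Lambda'(\lambda_0) = -\alpha\,\|\gamma(\lambda_0) u_0\|^2 \ne 0$. The case $d = 1$ is immediate: $D_\Lambda(\lambda) = \Lambda - M(\lambda)$ and $D_\Lambda'(\lambda_0) = -M'(\lambda_0) \ne 0$.

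The main obstacle is part~(iii): one must justify the positivity identity $M'(\lambda) = \gamma(\bar{\lambda})^*\gamma(\lambda)$ on $\rho(A_0)$ and extract the rank-one structure of $\mathrm{adj}(\Lambda - M(\lambda_0))$ from the assumption of a one-dimensional kernel. Both are standard facts in the boundary triple calculus that can be imported from the references cited at the start of Section~2, though a short stand-alone derivation would only require differentiating $\Gamma_1 \gamma(\lambda) = M(\lambda)$ against the abstract Green identity.
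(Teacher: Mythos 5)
Your proposal is correct, but it is worth noting that the paper does not actually prove this proposition: its ``proof'' consists entirely of citations, referring item (i) to \cite[Proposition 1.21]{BGP08}, item (ii) to \cite[Theorem 1.36\,(1)]{BGP08}, and item (iii) to \cite[Corollary 4.4, Proposition 5.1\,(iii)]{MN13}. What you have written is a self-contained reconstruction of those cited results, and each step checks out: the resolvent-type identity for the $\gamma$-field does transfer holomorphy to $M(\cdot)=\Gamma_1\gamma(\cdot)$; the maps $f\mapsto\Gamma_0 f$ and $u\mapsto\gamma(\lambda)u$ are indeed mutually inverse bijections between $\ker(A_\Lambda-\lambda)$ and $\ker(\Lambda-M(\lambda))$; and for (iii) the combination of Jacobi's formula, the rank-one structure $\mathrm{adj}(\Lambda-M(\lambda_0))=\alpha\,u_0u_0^*$ with $\alpha\neq 0$ (which follows from $B\,\mathrm{adj}(B)=\det(B)I=0$ together with Hermiticity of $\Lambda-M(\lambda_0)$ at the necessarily real point $\lambda_0$), and the strict positivity $\langle M'(\lambda_0)u_0,u_0\rangle=\|\gamma(\lambda_0)u_0\|^2>0$ coming from $M'(\lambda)=\gamma(\bar\lambda)^*\gamma(\lambda)$ and injectivity of $\gamma(\lambda_0)$, yields $D_\Lambda'(\lambda_0)\neq 0$. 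The two ingredients you flag as needing justification (the derivative formula for $M$ and the adjugate structure) are exactly the standard facts the paper outsources to the literature, and your suggested derivation of $M'$ from the Green identity is the usual one. In short, your route is more informative than the paper's (it exposes \emph{why} the zero is simple, namely the sign-definiteness of $M'$ on the kernel), at the cost of reproving known lemmas; the paper's approach buys brevity by leaning on \cite{BGP08} and \cite{MN13}.
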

%------------------------------------------------------------------
%%------------------------------------------------------------------
\begin{proof}
All the statements of this proposition are known. 
Item (i) can be found in \cite[Proposition 1.21]{BGP08}, 
see also \cite[Proposition 14.15\,(iv)]{S} and item (ii) 
is given in \cite[Theorem 1.36\,(1)]{BGP08}, 
see also \cite[Proposition 14.17\,(ii)]{S}.
For item (iii) see \cite[Corollary 4.4, Proposition 5.1\,(iii)]{MN13}.
\end{proof}
%*********************************************************************
\section{Abstract point contact and its weak coupling regime}
%*********************************************************************
In this section we present an abstract treatment of point contacts in the framework of boundary triples and obtain the perturbation series of the simple eigenvalue in the weak coupling regime. We make use of the following hypothesis.\\[1.0mm]
\noindent {\bf Hypothesis II.}\,{\em Let $\widetilde {\mathcal H}$ and $\widehat {\mathcal H}$ be separable Hilbert spaces.
Let $\widetilde A$ and $\widehat A$ be closed, densely defined, symmetric operators in  $\widetilde {\mathcal H}$ and $\widehat {\mathcal H}$, respectively, both with deficiency indices $(d,d)$. Let $\{{\mathbb C}^d,\widetilde \Gamma_0,\widetilde \Gamma_1\}$ 
and $\{{\mathbb C}^d,\widehat \Gamma_0,\widehat \Gamma_1\}$ be boundary triples for 
$\widetilde A^*$ and $\widehat A^*$, respectively. 
}
\\[1.2mm]
\noindent The next lemma appears to be useful in what follows.
%------------------------------------------------------------------
%------------------------------------------------------------------
\begin{lemma}[\!\!{\cite[Section 1.4.4]{BGP08}}] 
\label{lem:bt}
Assume that Hypothesis II holds.
Then the operator $\widetilde A\oplus \widehat A$ is closed, densely defined and symmetric in the Hilbert space $\widetilde {\mathcal H}\oplus \widehat {\mathcal H}$ with deficiency indices $(2d,2d)$ and 
$\{
{\mathbb C}^{2d},
\widetilde\Gamma_0\oplus\widehat\Gamma_0,
\widetilde\Gamma_1\oplus\widehat\Gamma_1
\}$ 
is a boundary triple for 
$(\widetilde A\oplus \widehat A)^*$.
\end{lemma}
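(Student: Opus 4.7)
The plan is to verify the four claims in order: closedness, density of domain, symmetry, and then deficiency indices, and finally the two defining properties of a boundary triple (surjectivity and the abstract Green's identity).

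First, I would note that closedness, density of the domain, and symmetry of $\widetilde A\oplus\widehat A$ in $\widetilde{\mathcal H}\oplus\widehat{\mathcal H}$ follow immediately from the corresponding properties of the summands together with the standard fact that the algebraic direct sum of two dense linear subspaces is dense in the direct sum Hilbert space. For the adjoint I would use the well-known identity $(\widetilde A\oplus\widehat A)^\ast = \widetilde A^\ast\oplus\widehat A^\ast$ with domain $\operatorname{dom}\widetilde A^\ast\oplus\operatorname{dom}\widehat A^\ast$. Since $\ker((\widetilde A\oplus\widehat A)^\ast-\lambda) = \ker(\widetilde A^\ast-\lambda)\oplus\ker(\widehat A^\ast-\lambda)$ for every $\lambda\in\mathbb{C}\setminus\mathbb{R}$, the deficiency subspaces are of dimension $d+d = 2d$ on each half-plane, yielding deficiency indices $(2d,2d)$.

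Next I would establish the boundary triple property. Abbreviate $\Gamma_j := \widetilde\Gamma_j\oplus\widehat\Gamma_j$ for $j=0,1$ and $\Gamma := (\Gamma_0,\Gamma_1)^\top$, viewed as a map into $\mathbb{C}^{2d}\oplus\mathbb{C}^{2d}\simeq\mathbb{C}^{4d}$. Surjectivity of $\Gamma$ is immediate componentwise: given any target $(\widetilde a_0,\widehat a_0,\widetilde a_1,\widehat a_1)$, surjectivity of $\widetilde\Gamma = (\widetilde\Gamma_0,\widetilde\Gamma_1)^\top$ and $\widehat\Gamma = (\widehat\Gamma_0,\widehat\Gamma_1)^\top$ produces $\widetilde f\in\operatorname{dom}\widetilde A^\ast$ and $\widehat f\in\operatorname{dom}\widehat A^\ast$ hitting the respective halves, and $(\widetilde f,\widehat f)$ is the required preimage.

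For the abstract Green's identity, I would take $f=(\widetilde f,\widehat f), g=(\widetilde g,\widehat g)\in\operatorname{dom}(\widetilde A^\ast\oplus\widehat A^\ast)$, expand
\[
((\widetilde A^\ast\oplus\widehat A^\ast)f,g) - (f,(\widetilde A^\ast\oplus\widehat A^\ast)g)
= \bigl[(\widetilde A^\ast \widetilde f,\widetilde g) - (\widetilde f,\widetilde A^\ast \widetilde g)\bigr] + \bigl[(\widehat A^\ast\widehat f,\widehat g) - (\widehat f,\widehat A^\ast\widehat g)\bigr],
\]
apply the Green's identity for each of the two given boundary triples, and observe that the inner product on $\mathbb{C}^{2d}=\mathbb{C}^d\oplus\mathbb{C}^d$ splits as a sum of the two $\mathbb{C}^d$ inner products, so the right-hand side collapses to $(\Gamma_1 f,\Gamma_0 g) - (\Gamma_0 f,\Gamma_1 g)$.

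There is no real obstacle here: every step is a direct componentwise verification built from the two given boundary triples. The only mild subtlety is bookkeeping the identification $\mathbb{C}^d\oplus\mathbb{C}^d\cong\mathbb{C}^{2d}$ consistently for both $\Gamma_0$ and $\Gamma_1$, so that the inner products in the Green's identity factor as claimed; once this is fixed, the proof is essentially automatic, which is presumably why it is attributed to the reference \cite[Section 1.4.4]{BGP08} without redoing it.
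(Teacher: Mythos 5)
Your proof is correct and is exactly the standard componentwise verification that the paper delegates to the cited reference \cite[Section 1.4.4]{BGP08} (the paper itself gives no proof of this lemma). All steps — the identity $(\widetilde A\oplus\widehat A)^*=\widetilde A^*\oplus\widehat A^*$, the additivity of deficiency indices, the componentwise surjectivity of $\Gamma$, and the splitting of Green's identity over the orthogonal sum — are sound, and you correctly flag the only point requiring care, namely the consistent identification $\mathbb{C}^d\oplus\mathbb{C}^d\cong\mathbb{C}^{2d}$ in both boundary maps.
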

%------------------------------------------------------------------
%------------------------------------------------------------------
Our model operator $A_{\Lambda}$ in the Hilbert space $\widetilde {\mathcal H} \oplus \widehat 
{\mathcal H}$ is defined as 
\[
\begin{split}
A_{\Lambda}
(\widetilde f\oplus \widehat f) &:= 
\widetilde A^*\widetilde f\oplus \widehat A^* \widehat f,\\
{\rm dom}\,A_{\Lambda} &:= 
\Bigg\{\widetilde f\oplus \widehat f \in 
{\rm dom}\, \widetilde A^*
\oplus
{\rm dom}\, \widehat A^*\colon  
\Lambda
\begin{pmatrix} 
\widetilde\Gamma_0\widetilde f\\ 
\widehat\Gamma_0\widehat f \end{pmatrix} 
= 
\begin{pmatrix} 
\widetilde\Gamma_1\widetilde f\\ 
\widehat\Gamma_1\widehat f 
\end{pmatrix}\Bigg\},
\end{split}
\]
with a Hermitian $2d\times 2d$ matrix of the form
\begin{equation}
\label{Lambda}
\Lambda := 
\begin{pmatrix}
\alpha I_{d} & \omega I_{d}\\
\overline\omega I_{d} & \beta I_{d}
\end{pmatrix},
\qquad
\alpha,\beta\in{\mathbb R},~\omega\in{\mathbb C}.
\end{equation}
%------------------------------------------------------------------
%------------------------------------------------------------------
\begin{proposition}
The operator $A_{\Lambda}$, defined as above, is self-adjoint in the Hilbert space $\widetilde {\mathcal H}\oplus \widehat {\mathcal H}$.
\end{proposition}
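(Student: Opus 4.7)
The plan is to identify $A_\Lambda$ as an extension of $A := \widetilde{A} \oplus \widehat{A}$ parametrized, in the sense of Proposition~\ref{prop:bt} and Remark~\ref{rem:extensions}, by a Hermitian matrix on $\mathbb{C}^{2d}$, and then invoke the general boundary-triple characterization of self-adjoint extensions.

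First, by Lemma~\ref{lem:bt}, the triple
$\bigl\{\mathbb{C}^{2d},\ \widetilde\Gamma_0\oplus\widehat\Gamma_0,\ \widetilde\Gamma_1\oplus\widehat\Gamma_1\bigr\}$
is a boundary triple for $(\widetilde A\oplus\widehat A)^*$. Writing $\Gamma_0 := \widetilde\Gamma_0\oplus\widehat\Gamma_0$ and $\Gamma_1 := \widetilde\Gamma_1\oplus\widehat\Gamma_1$, the domain of $A_\Lambda$ stated in the proposition is precisely
\[
\{ f\in\mathrm{dom}\,(\widetilde A\oplus\widehat A)^* : \Gamma_1 f = \Lambda \Gamma_0 f\},
\]
which, by the convention of Remark~\ref{rem:extensions}, is exactly the extension associated with the relation $\mathrm{graph}(\Lambda) \subset \mathbb{C}^{2d}\times\mathbb{C}^{2d}$.

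Second, I would verify that $\Lambda$ from \eqref{Lambda} is a Hermitian $2d\times 2d$ matrix: since $\alpha,\beta\in\mathbb{R}$ the diagonal blocks are self-adjoint, and the off-diagonal blocks $\omega I_d$ and $\overline\omega I_d$ are mutual adjoints, so $\Lambda^*=\Lambda$. Consequently $\mathrm{graph}(\Lambda)$ is a self-adjoint linear relation in $\mathbb{C}^{2d}$.

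Third, applying Proposition~\ref{prop:bt} to the boundary triple from Lemma~\ref{lem:bt} with the self-adjoint relation $\Theta := \mathrm{graph}(\Lambda)$ yields that $A_\Lambda = A_\Theta$ is a self-adjoint extension of $\widetilde A\oplus\widehat A$ in $\widetilde{\mathcal H}\oplus\widehat{\mathcal H}$, which is the claim. There is essentially no obstacle here: the only content is the recognition that the block form \eqref{Lambda} with real diagonal and conjugate off-diagonal entries is Hermitian, and that the abstract boundary condition $\Gamma_1 f = \Lambda \Gamma_0 f$ coincides with membership of $\Gamma f$ in $\mathrm{graph}(\Lambda)$; everything else is a direct citation of Proposition~\ref{prop:bt} and Lemma~\ref{lem:bt}.
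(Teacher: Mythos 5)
Your proposal is correct and follows exactly the route the paper takes: it cites Lemma~\ref{lem:bt} for the direct-sum boundary triple, checks that the block matrix \eqref{Lambda} is Hermitian, and then invokes Remark~\ref{rem:extensions} and Proposition~\ref{prop:bt}. The paper's own proof is a one-line appeal to the same three ingredients, so you have simply written out the details it leaves implicit.
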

%------------------------------------------------------------------
%------------------------------------------------------------------
\begin{proof}
The statement of this proposition is a straightforward consequence of the structure of the matrix $\Lambda$, Proposition~\ref{prop:bt}, Remark~\ref{rem:extensions} and Lemma~\ref{lem:bt}.
\end{proof}
%------------------------------------------------------------------
%------------------------------------------------------------------
The next theorem contains the main results of this note: the two terms expansion of a bound state of $A_{\Lambda}$ for small coupling parameter $|\omega|$ in the case of arbitrary 
$d\in{\mathbb N}$ and the three terms analogous expansion in the special case $d = 1$.
In its formulation we use self-adjoint operators
\begin{equation}
\label{ext}
\begin{split}
\widetilde A_{[\alpha]} := 
\widetilde A^*\upharpoonright 
\ker(\widetilde\Gamma_1 - \alpha\widetilde\Gamma_0),&
\qquad 
\widetilde A_0 := 
\widetilde A^*\upharpoonright \ker\widetilde\Gamma_0,\\
\widehat A_{[\beta]} := 
\widehat A^*\upharpoonright 
\ker(\widehat\Gamma_1 - \beta\widehat\Gamma_0),&\qquad
\widehat  A_0 := 
\widehat A^*\upharpoonright \ker\widehat\Gamma_0.
\end{split}
\end{equation}

Let $L$ be a $d \times d$-matrix. In the following we use the notion of the adjugate matrix $\mathrm{adj}(L)$, cf.~\cite{Bosch2008,MN98}.
Notice that the adjugate of a matrix is quite different from the adjoint one $L^*$. 
%------------------------------------------------------------------
%------------------------------------------------------------------
\begin{theorem}
\label{thm:main}
Assume that Hypothesis II holds with some $d \in{\mathbb N}$. 
Let $\widetilde M$ and $\widehat M$ be the Weyl functions 
associated with boundary triples from that hypothesis. Let the self-adjoint operators $\widetilde A_{[\alpha]}$ and $\widehat A_{[\beta]}$ be as above. 
Assume that the real value $\lambda_0$ satisfies  $\lambda_0\in\rho(\widetilde A_0)\cap\rho(\widehat A_0)\cap\rho(\widetilde A_{[\alpha]})$ and $\lambda_0$ is a simple isolated eigenvalue of $\widehat A_{[\beta]}$.
\begin{itemize}
\item [\em (i)] 
Then for sufficiently small $|\omega|$ in the discrete spectrum of $A_{\Lambda}$ there is a branch
\begin{equation}
\label{branch}
\lambda(|\omega|^2) = \lambda_0 + a|\omega|^2 + O(|\omega|^4),\qquad |\omega|\rightarrow 0+,
\end{equation}
with 
\begin{equation}
\label{A}
a := \frac{
{\rm tr}\,
\Big(
{\rm adj}\,\big(\beta I_d -\widehat M(\lambda_0)\big)
\big(\widetilde M(\lambda_0)-\alpha I_d\big)^{-1}
\Big)}
{
{\rm tr}\,\Big( 
{\rm adj}\,\big(\beta I_d -\widehat M(\lambda_0)\big)
\widehat{M}'(\lambda_0)\Big)},
\end{equation}
where ${\rm adj}\,(\beta I_d -\widehat M(\lambda_0))$ is the 
adjugate matrix.
%--------------------------------------------------------------
\item [\em (ii)]
Suppose that $d =1$. Then the expansion \eqref{branch} can be
extended as
\begin{equation}
\label{branch2}
\lambda(|\omega|^2) = \lambda_0 + a|\omega|^2 + b|\omega|^4+  O(|\omega|^6),\qquad |\omega|\rightarrow 0+,
\end{equation}
with
\begin{equation}
\label{A1}
a := 
\frac{1}{(\widetilde{M}(\lambda_0) - \alpha)\widehat{M}'(\lambda_0)}
\end{equation}
and
\begin{equation}
\label{B}
b := 
\frac{1}{((\widetilde{M}(\lambda_0) - \alpha)\widehat{M}'(\lambda_0))^2}
\left(
\frac{\widetilde{M}'(\lambda_0)}
{\alpha-\widetilde{M}(\lambda_0)} 
-\frac{1}{2}
\frac{\widehat{M}''(\lambda_0)}{\widehat{M}'(\lambda_0)}
\right).
\end{equation}
\end{itemize}
\end{theorem}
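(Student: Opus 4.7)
The plan is to reduce the eigenvalue equation for $A_\Lambda$ to a scalar equation via the abstract characterization of Proposition \ref{prop:gammaM}. Since $A = \widetilde A \oplus \widehat A$ and the boundary triple of $A^*$ from Lemma \ref{lem:bt} is the direct sum, the Weyl function is block diagonal: $M(\lambda) = \mathrm{diag}(\widetilde M(\lambda), \widehat M(\lambda))$. With the block form of $\Lambda$ in \eqref{Lambda}, Proposition \ref{prop:gammaM}(ii) tells us that the eigenvalues of $A_\Lambda$ near $\lambda_0$ are exactly the zeros of
\[
D_\Lambda(\lambda) = \det\!\bigl(\Lambda - M(\lambda)\bigr)
= \det\!\bigl(\alpha I_d - \widetilde M(\lambda)\bigr)\cdot \det\!\bigl((\beta I_d - \widehat M(\lambda)) - |\omega|^2(\alpha I_d - \widetilde M(\lambda))^{-1}\bigr),
\]
by the Schur complement / block determinant formula. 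The factor $\det(\alpha I_d - \widetilde M(\lambda))$ does not vanish near $\lambda_0$ since $\lambda_0\in\rho(\widetilde A_{[\alpha]})$ (using Proposition \ref{prop:gammaM}(ii) applied to the triple for $\widetilde A^*$). Setting $s := |\omega|^2$, $F(\lambda) := \alpha I_d - \widetilde M(\lambda)$, $G(\lambda) := \beta I_d - \widehat M(\lambda)$, the relevant equation becomes
\[
h(\lambda,s) := \det\!\bigl(G(\lambda) - sF(\lambda)^{-1}\bigr) = 0.
\]

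For part (i) I would apply the analytic implicit function theorem to $h$. At $s=0$ we have $h(\lambda_0,0) = \det G(\lambda_0) = 0$, and Proposition \ref{prop:gammaM}(iii) applied to $\widehat A_{[\beta]}$ gives $\partial_\lambda h(\lambda_0,0) \ne 0$; this yields an analytic branch $\lambda(s) = \lambda_0 + as + O(s^2)$. To compute $a = -\partial_s h / \partial_\lambda h$ at $(\lambda_0,0)$ I would invoke Jacobi's formula $\frac{d}{dt}\det X(t) = \mathrm{tr}(\mathrm{adj}(X(t)) X'(t))$. Differentiating in $s$ yields $\partial_s h|_{s=0} = -\mathrm{tr}(\mathrm{adj}(G(\lambda_0)) F(\lambda_0)^{-1})$, and in $\lambda$ it yields $\partial_\lambda h|_{s=0} = -\mathrm{tr}(\mathrm{adj}(G(\lambda_0)) \widehat M'(\lambda_0))$. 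Taking the quotient and using $F(\lambda_0)^{-1} = -(\widetilde M(\lambda_0) - \alpha I_d)^{-1}$ reproduces \eqref{A} exactly.

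For part (ii), with $d=1$ every matrix is a scalar and $\mathrm{adj}(\cdot)\equiv 1$, so the equation $h(\lambda,s)=0$ simplifies to $g(\lambda) = s$, where $g(\lambda) := (\beta - \widehat M(\lambda))(\alpha - \widetilde M(\lambda))$. Now $g(\lambda_0) = 0$ and $g'(\lambda_0) = -\widehat M'(\lambda_0)(\alpha - \widetilde M(\lambda_0)) \ne 0$, so I would Taylor expand $g(\lambda_0 + as + bs^2 + O(s^3)) = s$ and match powers of $s$. The $s^1$ term gives $a = 1/g'(\lambda_0)$, which is \eqref{A1}. The $s^2$ term gives $b = -g''(\lambda_0)/(2 g'(\lambda_0)^3)$; using $G(\lambda_0)=0$, one finds $g''(\lambda_0) = \widehat M''(\lambda_0)(\widetilde M(\lambda_0) - \alpha) + 2\widehat M'(\lambda_0)\widetilde M'(\lambda_0)$, and a short rearrangement produces \eqref{B}.

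The conceptual heart of the argument is transparent; the main obstacle is bookkeeping in the $d>1$ case, specifically justifying that $\mathrm{tr}(\mathrm{adj}(G(\lambda_0))\widehat M'(\lambda_0)) \ne 0$ so that \eqref{A} is well-defined. This follows because by assumption $G(\lambda_0)$ has rank $d-1$ (simplicity of the eigenvalue), so $\mathrm{adj}(G(\lambda_0))$ has rank one and equals $c\, P$ for the spectral projection $P$ onto $\ker G(\lambda_0)$ with $c \ne 0$, and then the denominator in \eqref{A} is a nonzero multiple of $\frac{d}{d\lambda}\det G(\lambda)|_{\lambda_0}$, which is nonzero by Proposition \ref{prop:gammaM}(iii). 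A secondary technical point is ensuring the branch \eqref{branch} really depends on $s=|\omega|^2$ (rather than on $\omega$ itself), which is immediate because $h$ depends on $\omega$ only through $|\omega|^2$.
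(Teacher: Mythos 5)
Your proposal is correct and follows essentially the same route as the paper: reduce the eigenvalue condition to a scalar determinant equation via Proposition \ref{prop:gammaM}\,(ii), apply the implicit function theorem at $(\lambda_0,0)$ using Proposition \ref{prop:gammaM}\,(iii) for the nondegeneracy, and extract $a$ (and, for $d=1$, $b$) via Jacobi's formula; your part (ii) computation reproduces \eqref{A1} and \eqref{B} exactly. The only difference is cosmetic: you factor the block determinant by a Schur complement, $\det(\Lambda-M(\lambda))=\det F(\lambda)\,\det\bigl(G(\lambda)-|\omega|^2F(\lambda)^{-1}\bigr)$, whereas the paper uses Silvester's commuting-block formula to get $\det\bigl(F(\lambda)G(\lambda)-|\omega|^2I_d\bigr)$; the two differ by the nonvanishing factor $\det F(\lambda)$, so the local zero sets and the resulting coefficients coincide.
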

%------------------------------------------------------------------
%------------------------------------------------------------------
\begin{proof}
(i)
The proof of this item is carried out in three steps.\\
\noindent {\em Step I}.
For sufficiently small $\varepsilon > 0$ the interval $I := (\lambda_0 -\varepsilon,\lambda_0 +\varepsilon)$ is contained 
in the set $\rho(\widetilde A_0)\cap\rho(\widehat A_0)$. By Proposition~\ref{prop:gammaM}\,(i) the following matrix-valued function
\[
T(\lambda) := (\alpha I_d -\widetilde M(\lambda))(\beta I_d -\widehat M(\lambda))
\]
is well-defined and $C^\infty$-smooth on $I$. 
Next we introduce the scalar-valued function
\begin{equation}
\label{F}
F\colon I\times{\mathbb R}\rightarrow {\mathbb R},\quad F(\lambda,x) := {\rm det}\,\big(T(\lambda)- x I_d\big),
\end{equation}
which is $C^\infty$-smooth on $I\times{\mathbb R}$. 
\\
%------------------------------------------------------------------------
\noindent {\em Step II}.
The following two functions
\[
\widetilde{D}_\alpha(\lambda) := 
{\rm det}\,\big(\alpha I_d - \widetilde{M}(\lambda)\big)
\quad\text{and}\quad
\widehat{D}_\beta(\lambda) := 
{\rm det}\,\big(\beta I_d - \widehat{M}(\lambda)\big)
\]
are well-defined and $C^\infty$-smooth on $I$.
Jacobi's formula~\cite{G72, MN98} and the identity 
${\rm adj}\,(L_1L_2)= {\rm adj}\,(L_2)\,{\rm adj}\,(L_1)$ imply
%%--
\[
F_{x}(\lambda_0,0) = -{\rm tr}\,\Big(
{\rm adj}\,\big(\beta I_d -\widehat{M}(\lambda_0)\big)
{\rm adj}\,\big(\alpha I_d -\widetilde{M}(\lambda_0)\big)
\Big).
\]
%&--
In view of 
$\lambda_0\in\rho(\widetilde{A}_{[\alpha]})$ and of Proposition~\ref{prop:gammaM}\,(ii)
the matrix $\alpha I_d - \widetilde M(\lambda_0)$ is invertible.
For any invertible matrix $L$ the identity
${\rm adj}\,(L) = {\rm det}\,(L)\,L^{-1}$ holds.
Hence, we arrive at
\begin{equation}
\label{deriv1}
F_{x}(\lambda_0,0) = 
\widetilde{D}_{\alpha}(\lambda_0)
{\rm tr}\,
\Big(
{\rm adj}\,\big(\beta I_d -\widehat{M}(\lambda_0)\big)
\big(\widetilde{M}(\lambda_0)-\alpha I_d\big)^{-1}
\Big).
\end{equation}
Note that 
$F(\lambda,0) = 
\widetilde{D}_\alpha(\lambda)
\widehat{D}_\beta(\lambda)$, 
where  the identity 
${\rm det}\,(L_1L_2) = 
{\rm det}\,(L_1)\,{\rm det}\,(L_2)$ 
is used. In view of $\lambda_0\in\sigma_{\rm d}(\widehat{A}_{[\beta]})$ and of Proposition~\ref{prop:gammaM}\,(ii) we get 
$\widehat{D}_\beta(\lambda_0) = 0$, which
implies $F(\lambda_0,0) = 0$.
Next we compute $F_\lambda$ at the point $(\lambda_0,0)$
\begin{equation}
\label{deriv2}
\begin{split}
F_{\lambda}(\lambda_0,0) &= 
\Big(\frac{d}{d\lambda}F(\lambda,0)\Big)
\Big|_{\lambda = \lambda_0} \\
&=
\widetilde{D}_\alpha'(\lambda_0)
\widehat{D}_\beta(\lambda_0)
+
\widetilde{D}_\alpha(\lambda_0)
\widehat{D}_\beta'(\lambda_0) =
\widetilde{D}_\alpha(\lambda_0)
\widehat{D}_\beta'(\lambda_0).
\end{split}
\end{equation}
Since the eigenvalue $\lambda_0$ is simple in the spectrum of $\widehat{A}_{[\beta]}$, by Proposition~\ref{prop:gammaM}\,(iii)  $\widehat{D}'_\beta(\lambda_0) \ne 0$ holds. 
Similarly $\widetilde{D}_\alpha(\lambda_0) \ne 0$ because of $\lambda_0\in\rho(\widetilde{A}_{[\alpha]})$. Hence we obtain 
that $F_{\lambda}(\lambda_0,0) \ne 0$.  Recall that $F(\lambda_0,0) = 0$ and  that $F$ is $C^\infty$-smooth. Therefore, by the classical implicit function theorem~\cite[Theorem 3.3.1]{KP} there exists the $C^\infty$-smooth function $\lambda(\cdot)$ defined on a sufficiently small neighborhood of the origin such that $\lambda(0) = \lambda_0$ and that $F(\lambda(x),x) = 0$ holds pointwise. The derivative of $\lambda(\cdot)$ is given as usual by 
\begin{equation}
\label{lambda'}
\lambda'(x) = 
-\frac{F_{x}(\lambda(x),x)}
{F_{\lambda}(\lambda(x),x)}.
\end{equation}
Again using Jacobi's formula we get
\begin{equation}
\label{D'}
\widehat{D}'_\beta(\lambda_0) = 
-{\rm tr}\,
\Big(
{\rm adj}\,
\big(\beta I_d -\widehat{M}(\lambda_0)\big)
\widehat{M}'(\lambda_0) 
\Big).
\end{equation}
Substituting \eqref{deriv1}, \eqref{deriv2} and \eqref{D'} into  \eqref{lambda'} we arrive at $\lambda'(0) = a$ with $a$ given by \eqref{A}.
Hence we obtain that
\begin{equation}
\label{lambdaexp}
\lambda(x) = 
\lambda_0 
+ 
ax 
+
O(x^2),\qquad x\rightarrow 0.
\end{equation}
%------------------------------------------------------------------
\noindent {\em Step III.}
By Proposition~\ref{prop:gammaM}\,(ii) a point $\lambda\in\rho(\widetilde A_0)\cap\rho(\widehat A_0)$ satisfying
\[
{\rm det}\,
\begin{pmatrix}
\alpha I_d -  \widetilde M(\lambda) & 
\omega I_d\\
\overline\omega I_d & 
\beta I_d -\widehat M(\lambda)
\end{pmatrix} = 0
\]
is in the discrete spectrum of $A_{\Lambda}$. 
By \cite[Theorem 3]{Silvester2000} one gets that 
\[
{\rm det}\,
\begin{pmatrix}
\alpha I_d -  \widetilde{M}(\lambda) & 
\omega I_d\\
\overline\omega I_d & 
\beta I_d -\widehat M(\lambda)
\end{pmatrix} = \mathrm{det}((\alpha I_d - \widetilde{M}(\lambda))(\beta I_d - \widehat{M}(\lambda)) - |\omega|^2 I_d).
\]
That is $\lambda\in\rho(\widetilde A_0)\cap\rho(\widehat A_0)$ satisfying $F(\lambda,|\omega|^2) = 0$ with $F$ as in \eqref{F} belongs to the discrete spectrum of $A_{\Lambda}$. 
Hence for sufficiently small $|\omega|^2$
we have $\lambda(|\omega|^2)\in\sigma_{\rm d}(A_{\Lambda})$ with $\lambda(\cdot)$ defined by Step II. Finally, the expansion \eqref{lambdaexp} implies \eqref{branch} in the formulation of the theorem. 
%-------------------------------------------------------------
%-------------------------------------------------------------

\noindent (ii)
The proof of this item goes along the lines of the proof of (i) and we indicate only the  differences.
Let $F$ be defined as in \eqref{F}. In this special case ($d = 1$)
we have
\[
F(\lambda,x) = \big(\alpha  -\widetilde M(\lambda)\big)
			   \big(\beta -\widehat M(\lambda)\big)  - x. 
\] 
The 1st and 2nd order partial derivatives of $F$ 
are computed below 
\begin{equation}
\label{deriv1(1)}
\begin{split}
F_{x}(\lambda,x) &= -1,\quad 
F_{\lambda x}(\lambda,x) = 0,\quad 
F_{xx}(\lambda,x) =0,\\
F_{\lambda}(\lambda,x) &= 
 - \widetilde{M}'(\lambda)
(\beta -\widehat{M}(\lambda)) - 
(\alpha -\widetilde{M}(\lambda))
\widehat{M}'(\lambda),\\
F_{\lambda\lambda}(\lambda,x) &=
-\widetilde{M}''(\lambda)
(\beta - \widehat{M}(\lambda)) 
+ 
2\widetilde{M}'(\lambda)\widehat{M}'(\lambda)
- 
(\alpha - \widetilde{M}(\lambda))
\widehat{M}''(\lambda).
\end{split}
\end{equation}
In particular, we have at the point $(\lambda_0,0)$
\begin{equation}
\label{deriv2(1)}
\begin{split}
F_{\lambda}(\lambda_0,0) &= 
(\widetilde{M}(\lambda_0)-\alpha)\widehat{M}'(\lambda_0),\\
F_{\lambda\lambda}(\lambda_0,0) &= 2\widetilde{M}'(\lambda_0)\widehat{M}'(\lambda_0) 
+ 
(\widetilde{M}(\lambda_0)-\alpha)
\widehat{M}''(\lambda_0),
\end{split}
\end{equation}
where we used that $\beta - \widehat M(\lambda_0) = 0$, which is true in view of $\lambda_0\in\sigma_{\rm d}(\widehat A_{[\beta]})$. Similarly as on Step II in the proof of (i) we get that $F(\lambda_0,0) = 0$ and $F_\lambda(\lambda_0,0)\ne 0$. Hence, there exists the $C^\infty$-smooth function $\lambda(\cdot)$ defined on a sufficiently small neighborhood of the origin such that $\lambda(0) =\lambda_0$, that $F(\lambda(x),x) = 0$ holds pointwise and that  $\lambda'(x)$ is as in \eqref{lambda'}.
Substituting the identity $F_x(\lambda(x),x) = -1$
into \eqref{lambda'} we obtain that
\begin{equation}
\label{lambda'(1)}
\lambda'(x) = \frac{1}{F_\lambda(\lambda(x),x)},
\end{equation}
and further substituting \eqref{deriv2(1)} into the above formula we get $\lambda'(0) = a$ with $a$ as in \eqref{A1}. 
Taking the derivative in \eqref{lambda'(1)} we get
\[
\lambda''(x)  =  
-\frac{F_{\lambda\lambda}(\lambda(x),x)\lambda'(x)
+ F_{\lambda x}(\lambda(x),x)}
{(F_{\lambda}(\lambda(x),x))^2}.
\]
Plugging \eqref{deriv1(1)} and \eqref{deriv2(1)} into the above formulae we obtain $\lambda''(0) = 2b$ with $b$ as in \eqref{B}. Hence we arrive at the expansion
\begin{equation*}
\label{lambdaexp2}
\lambda(x) = 
\lambda_0 
+ 
ax 
+ 
bx^2 
+ 
O(x^3),
\qquad x\rightarrow 0,
\end{equation*}
which implies \eqref{branch2} similarly as on Step III in the proof of (i) the expansion \eqref{lambdaexp} implied the formula \eqref{branch}.
\end{proof}
%-------------------------------------------------------------
%-------------------------------------------------------------
\begin{remark}
The roles of the operators $\widetilde A_{[\alpha]}$ and $\widehat A_{[\beta]}$ in the above theorem can be interchanged.
\end{remark}
\begin{remark}
Note that ${\rm adj}\,(0) = 1$ and in the special case $d = 1$ the formula \eqref{A} reduces to \eqref{A1}.
\end{remark}

\section{An example}
\label{sec:example}
Let the operator $A$ in the Hilbert space
$L^2({\mathbb R}^3)\otimes{\mathbb C}^d$ with 
$d \in{\mathbb N}$ be defined as
\begin{equation}
\label{Adef}
A f := (-\Delta + Q) f,\qquad
{\rm dom}\, A := \big\{f\in H^2({\mathbb R}^3)\otimes
{\mathbb C}^d\colon 
f(0,0,0) = {\bf 0}\big\},
\end{equation}
where $Q = Q^*$ is a $d\times d$ matrix.
The operator $A$ is closed, symmetric, densely defined with deficiency indices $(d,d)$,
cf.~\cite{AGHH05,BMN08,GMZ12} and \cite[Section 3]{BNP13}.
The adjoint of the above symmetric
operator can be characterized according to~\cite{BMN08,GMZ12} as
\[
\begin{split}
{\rm dom}\,A^*&= 
\Big \{f = f_0 + \vec a\tfrac{e^{-|x|}}{|x|} + \vec be^{-|x|}
\colon f_0 \in {\rm dom}\, A, \vec a,\vec b\in{\mathbb C}^{d}\Big\},\\
A^*f & = -\Delta f_0 - \vec a\tfrac{e^{-|x|}}{|x|} -
\vec b\Big(e^{-|x|} - \tfrac{2e^{-|x|}}{|x|}\Big)  + Qf.
\end{split}
\]
The triple $\{{\mathbb C}^d,\Upsilon_0,\Upsilon_1\}$
with  $\Upsilon_0,\Upsilon_1\colon {\rm dom}\,A^*\rightarrow
{\mathbb C}^d$, where 
\[
\Upsilon_0 f := \sqrt{4\pi}\vec a\quad\text{and}\quad
\Upsilon_1f := \sqrt{4\pi}\lim_{|x|\rightarrow 0}\big(f(x) - 
\tfrac{\vec a}{|x|}\big)
\]
is a boundary triple for $A^*$. By 
\cite[Proposition 4.1\, (iii)]{GMZ12} and
\cite[Proposition 3.3]{BNP13} 
the Weyl function  associated with the boundary triple
$\{{\mathbb C}^d,\Upsilon_0,\Upsilon_1\}$ is given by
$
M(\lambda) = i\sqrt{\lambda - Q}$, $\lambda \in
\rho(A_0)$.  Let us assume that $Q$ has only simple eigenvalues $q_1 <
\ldots < q_k < \ldots < q_d$. It turns out that $\sigma(A_0) =
[q_1,\infty)$.  Consider the self-adjoint extension 
$A_{[\alpha]} = A^*\upharpoonright\ker(\Upsilon_1 - \alpha\Upsilon_0)$
of $A$  with $\alpha < 0$. The extension $A_{[\alpha]}$ has below
the threshold $q_1$ at most $d$ eigenvalues. Moreover, $\lambda \in
(-\infty,q_1)$ is an eigenvalue if and only if $\lambda + \alpha^2 \in \sigma(Q)$. 
Hence, $\lambda_k := q_k - \alpha^2$ is an eigenvalue of $A_{[\alpha]}$
below the threshold $q_1$ if and only if $q_k - \alpha^2 < q_1$. 
In particular, we have $A_{[\alpha]} \ge
q_1- \alpha^2$.

Suppose that symmetric operators $\widetilde A$ and 
$\widehat A$ are as in \eqref{Adef} with $Q = \widetilde Q$
and $Q = \widehat Q$, respectively.
Let us assume that $\{\widetilde q_1,\widetilde
q_2,\ldots,\widetilde q_{d}\}$ and $\{\widehat q_1,\widehat
q_2,\ldots,\widehat q_{ d}\}$ are the simple eigenvalues of $\widetilde Q$ and
$\widehat Q$, respectively, ordered increasingly. 
We denote the instances of the triple $\{{\mathbb
  C}^d,\Upsilon_0,\Upsilon_1\}$ for 
$\widetilde A^*$ by $\{{\mathbb C}^d,\widetilde\Gamma_0,\widetilde\Gamma_1\}$ and for $\widehat A^*$ by 
$\{{\mathbb C}^d,\widehat\Gamma_0,\widehat\Gamma_1\}$.
The corresponding Weyl functions are clearly given by 
\[
\widetilde M(\lambda) = i\sqrt{\lambda- \widetilde Q}
\quad\text{and}\quad \widehat M(\lambda) = 
i\sqrt{\lambda- \widehat Q}.
\]
Note that the triple $\{{\mathbb C}^{2d},\Gamma_0,\Gamma_1\}$
with $\Gamma_i = \widetilde \Gamma_i \oplus \widehat \Gamma_i$,
$i=0,1$, is a boundary triple for $(\widetilde A\oplus \widehat A)^*$.  
The self-adjoint extensions $\widetilde A_{[\alpha]}$, $\widehat A_{[\beta]}$ ($\alpha,\beta < 0$) 
and $\widetilde A_0$, $\widehat A_0$ are defined as in \eqref{ext}.
The self-adjoint extension
$\widetilde A_{[\alpha]}$ of $\widetilde A$ satisfies 
$\widetilde A_{[\alpha]} \ge \widetilde q_1 - \alpha^2$.
The self-adjoint extension
${\widehat A}_{[\beta]}$ of $\widehat A$ has the spectrum 
$\sigma(\widehat A_{[\beta]})= \big(\cup_{k=1}^l
\{\widehat  q_k - \beta^2\}\big) \cup \big[\widehat q_1,+\infty\big)$,
where the eigenvalues are simple and $l$ is the greatest
integer $l \in \{1,\ldots,d\}$ satisfying 
$\widehat q_l - \beta^2 < \widehat q_1$.  
If the condition
\[
\widehat\lambda_k :=  \widehat q_{k} - \beta^2  < 
\widetilde q_1 - \alpha^2 
%\quad\text{and}\quad
%\widehat \lambda_k  < \widehat q_1
\]
holds for some $k\le l$, then $\widehat \lambda_k$ is a simple isolated eigenvalue
of $\widehat A_{[\beta]}$ and simultaneously  a
resolvent point of $\widetilde A_{[\alpha]}$. 
Consider the self-adjoint extension 
$A_{\Lambda} := (\widetilde A\oplus\widehat A)^*\upharpoonright
\ker(\Gamma_1 - \Lambda\Gamma_0)$
of $\widetilde A\oplus\widehat A$ with $\Lambda$ as in \eqref{Lambda}. 
Simple computations give us
\[
 \widehat M'(\widehat \lambda_k) 
= \frac{i}{2}\big(\widehat \lambda_k  - \widehat Q\big)^{-1/2} = 
\frac{1}{2}\big(\widehat Q - \widehat \lambda_k\big)^{-1/2}. \\
\]
With the above formula in hands we get using
Theorem~\ref{thm:main}\,(i) 
that in the discrete spectrum of $A_{\Lambda}$ exists a branch with
the expansion
\[
\widehat \lambda_k(|\omega|^2) = \widehat \lambda_k  
- 2\frac{
{\rm tr}\,\Big(
{\rm adj}\,\big(\sqrt{\widehat Q - \widehat \lambda_k} + \beta\big)
\big(\sqrt{\widetilde Q - \widehat \lambda_k} + \alpha\big)^{-1}\Big)}{
{\rm tr}\,
\Big({\rm adj}\,\big(\sqrt{\widehat Q - \widehat \lambda_k} + \beta\big)
\big(\widehat Q - \widehat \lambda_k\big)^{-1/2} \Big)}|\omega|^2
 + O(|\omega|^4),\quad \omega\rightarrow 0.\]
Notice that $\widehat Q - \widehat \lambda_1
\ge 0$ and $\widetilde Q - \widehat \lambda_1 \ge 0$. Since
$\sqrt{\widetilde Q - \widehat \lambda_1} + \alpha > 0$  and furthermore
${\rm adj}\,\big(\sqrt{\widehat Q - \widehat \lambda_1} + \beta\big) \ge 0$
we get that $\widehat \lambda_1(|\omega|^2) < \widehat \lambda_1$ for sufficiently small $\omega$. 

Next we consider the special case $d = 1$. Setting 
$\widetilde q := \widetilde q_1$ and $\widehat q :=  \widehat q_1$
we get $\sigma(\widetilde A_{[\alpha]})= 
\{\widetilde  q -\alpha^2\} \cup \big[\widetilde q,+\infty\big)$
and $\sigma(\widehat A_{[\beta]})= 
\{\widehat  q -\beta^2\} \cup \big[\widehat q,+\infty\big)$.
Let $\lambda_0 := \widehat  q -\beta^2$. If 
$\widehat  q -\beta^2 < \widetilde q$ and 
$\widetilde q - \alpha^2 \not=  \widehat q -\beta^2$, then $\lambda_0 \in \rho(\widetilde A_{[\alpha]})$. 

Let $E := \widehat q - \widetilde q$. Notice that $\beta^2 - E > 0$. 
Simple computations give us
\[
\begin{split}
\widetilde M(\lambda_0)  = -\sqrt{\beta^2-E},&\qquad
\widetilde M'(\lambda_0) = 
\frac{1}{2\sqrt{\beta^2 - E}},\\
\widehat M'(\lambda_0) =
\frac{1}{2|\beta|},&\qquad
\widehat M''(\lambda_0) =  \frac{1}{4|\beta|^3}.
\end{split}
\]
Hence, according to Theorem~\ref{thm:main}\,(ii)
in the discrete spectrum of $A_\Lambda$ exists a branch 
with the expansion
\begin{displaymath}
\begin{split}
\lambda(|\omega|^2) = \lambda_0 +&
\frac{2\beta}
{\sqrt{\beta^2 - E} +\alpha}|\omega|^2 +\\
&\frac{1}
{\Big(\sqrt{\beta^2 - E} + \alpha\Bigr)^3}
\frac{\beta^2 + E - \alpha \sqrt{\beta^2 - E}}{\sqrt{\beta^2 - E}}
|\omega|^4+
O(|\omega|^6)
\end{split}
\end{displaymath}
as $\omega\rightarrow 0$.
%If we set $a = \beta$ and $b = \alpha$, then 
The latter result is consistent with \cite[Theorem 3.1]{E91}, 
see also~\cite{E13}.

\section*{Acknowledgment}
The work of VL was supported by the Austrian Science Fund (FWF),
project P 25162-N26. VL thanks Weierstrass Institute for Applied Analysis and Stochastics for hospitality.
The work of IYP was supported by the Government of
Russian Federation (grant 074-U01)  and by State contract of the
Russian Ministry of Education and Science. IYP thanks TU Graz for
hospitality. Jussi Behrndt and Jonathan Rohleder are acknowledged for
interesting and fruitful discussions. The authors are very grateful to
the anonymous referee for useful suggestions.

\end{document}